\documentclass{article}[11pt]
\usepackage[margin=1in]{geometry}

\usepackage{amssymb}
\usepackage{amsmath}
\usepackage{amsthm}
\usepackage{bbm}
\usepackage{xcolor}
\usepackage[normalem]{ulem}
\usepackage{hyperref}
\definecolor{ceruleanblue}{rgb}{0.16, 0.32, 0.75}
\definecolor{darkmidnightblue}{rgb}{0.0, 0.2, 0.4}
\definecolor{darkpastelgreen}{rgb}{0.01, 0.75, 0.24}
\definecolor{bleudefrance}{rgb}{0.19, 0.55, 0.91}
\hypersetup{
	colorlinks   = true,
	citecolor    = bleudefrance,
	linkcolor	  = bleudefrance,
	urlcolor     = bleudefrance
}
\usepackage[capitalize, nameinlink]{cleveref}
\usepackage{thmtools}
\usepackage{thm-restate}

\usepackage{verbatim}

\newcommand{\Var}{\operatorname{Var}}

\newcommand{\unif}{\operatorname{Unif}}

\DeclareMathOperator*{\E}{\mathbb{E}}

\usepackage{thmtools, thm-restate}
\newtheorem{theorem}{Theorem}

\newtheorem{claim}[theorem]{Claim}

\newtheorem{corollary}[theorem]{Corollary}

\definecolor{HighlightColor}{rgb}{0.70,0.05,0.70}
\definecolor{Del}{rgb}{0.7,0.05,0.05}

  \usepackage{nth}
  \usepackage{intcalc}

  \newcommand{\cAAAI}[1]{AAAI\ Conference\ on\ Artificial (AAAI)}

   \newcommand{\cESA}[1]{European\ Symposium\ on\ Algorithms\ (ESA)}

\title{Simple Analysis of Priority Sampling}
\author{Majid Daliri, Juliana Freire, Christopher Musco, Aécio Santos, Haoxiang Zhang\\\\
New York University, Tandon School of Engineering\\
\{daliri.majid, juliana.freire, cmusco, aecio.santos, haoxiang.zhang\}@nyu.edu \\
}
\date{}

\begin{document}
\maketitle
\begin{abstract}
    We prove a tight upper bound on the variance of the priority sampling method (aka sequential Poisson sampling). Our proof is significantly shorter and simpler than the original proof given by Mario Szegedy at STOC 2006, which resolved a conjecture by Alon, Duffield, Lund, and Thorup. 
\end{abstract}

\section{Background}
Suppose we have a list of non-negative numbers $w_1, \ldots, w_n$. A common task in streaming and distributed algorithms is to collect a sample of this list, which can then be used to estimate arbitrary sums of a subset of the numbers. As toy examples, we might hope to estimate $\sum_{i: i \text{ is odd}} w_i$ or $\sum_{i: 100 \leq i \leq 200} w_i$. Importantly, the condition used to determine the subset will not be known in advance when collecting samples. 

It has been observed that to obtain accurate results for subset sum estimation, it is usually important to sample from $w_1, \ldots, w_n$ with ``probability proportional to size''. I.e., we want to collect a subset of size $k\ll n$ from this list in such a way that larger numbers are sampled with higher probability -- ideally proportional to or approximately proportional to their size. Such a subset will typically be more useful in estimating sums than a uniform sample. We will also refer to probability proportional to size as ``weighted sampling''. 

\subsection{Threshold Sampling}
One simple approach for weighted sampling is the so-called \emph{Threshold Sampling} method, also referred to as Poisson sampling \cite{DuffieldLundThorup:2005}. Threshold sampling is used in computer science due to applications in \emph{sample coordination}, a topic beyond the scope of this note \cite{Flajolet:1990,CohenKaplan:2013}.
For each item $w_i$, we draw a uniform random variable $u_i \sim \unif[0,1]$. Then, we fix a threshold $\tau \geq 0$ and sample all numbers $w_i$ for 
which $\frac{u_i}{w_i} \leq \tau$. Evidently, $w_i$ gets sampled with probability:
\begin{align*}
p_i = \min(1, w_i\tau).
\end{align*}
The probabilities $p_1, \ldots, p_n$ are approximately proportional to the weights $w_1, \ldots, w_n$ (in fact, exactly proportional unless $w_i\tau > 1$ for some $i$).
The expected number of items sampled is upper bounded by $\sum_{i=1}^n p_i \leq \sum_{i=1}^n w_i\tau = \tau \cdot W$,
where $W = \sum_{i=1}^n w_i$. If $\tau = \frac{k}{W}$, the expected number of items sampled is $\leq k$. To use our samples to estimate the sum of a subset of items $\mathcal{I}\subseteq\{1, \ldots, n\}$, a natural approach is to apply the Horvitz-Thompson estimator:
\begin{align*}
\sum_{i\in \mathcal{I}} \hat{w}_i&\approx \sum_{i\in \mathcal{I}} {w}_i & &\text{where for all $i\in 1,\ldots, n$,} & \hat{w}_i&=\mathbbm{1}\left[\frac{u_i}{w_i}\leq \tau\right] \cdot \frac{w_i}{p_i}.
\end{align*}
Here $\mathbbm{1}[A]$ denotes the indicator random variable that evaluates to $1$ if the event $A$ is true and $0$ otherwise. 
It is not hard to see that $\E[\hat{w}_i] = w_i$, so $\sum_{i\in \mathcal{I}} \hat{w}_i$ is an unbiased estimate for the true subset sum. Since $\hat{w}_1, \ldots, \hat{w}_n$ are independent, the variance of this estimate is $\sum_{i\in\mathcal{I}} \Var[\hat{w}_i]$, a quantity that depends on the unknown set $\mathcal{I}$. So, in lieu of bounding variance, a common goal is to bound the \emph{total variance}, $\sum_{i=1}^n \Var[\hat{w}_i]$.\footnote{Other proxy performance measures besides total variance have also been studied, like the average variance for random subsets of a fixed size \cite{SzegedyThorup:2007}.}
To do so, first note that when $w_i\tau \geq 1$ (i.e., $p_i = 1$), we have $\Var[\hat{w}_i] = 0$. So, we can restrict our attention to terms for which $w_i\tau < 1$ (i.e., $p_i = w_i\tau)$. Specifically, letting $\mathcal{K}$ be a set containing all $i$ for which $w_i\tau < 1$, and setting $\tau = \frac{k}{W}$ (so we take at most $k$ samples in expectation), the total variance can bounded by:
\begin{align*}
\sum_{i=1}^n \Var[\hat{w}_i] 
= 
\sum_{i \in \mathcal{K}} \Var[\hat{w}_i] =
\sum_{i \in \mathcal{K}}  \frac{w_i^2}{p_i^2} \Var\left[\mathbbm{1}\left[\frac{u_i}{w_i}\leq \tau\right]\right] 
= \sum_{i \in \mathcal{K}}\frac{w_i^2}{p_i^2} p_i(1-p_i) 
\leq  \sum_{i \in \mathcal{K}} \frac{w_i^2}{p_i} 
= \sum_{i \in \mathcal{K}}  \frac{w_i}{\tau} 
\leq \frac{W}{\tau} = \frac{W^2}{k}.
\end{align*}
This upper bound of $\frac{W^2}{k}$ for threshold sampling is known to be optimal in the sense that any sampling scheme generating a sequence of random variables $\hat{w}_1, \ldots, \hat{w}_n$ such that $\E[\hat{w}_i] = w_i$ cannot have a lower total variance if the expected number of non-zero variables is $\leq k$ \cite{DuffieldLundThorup:2007}.

\subsection{Priority Sampling}

While variance optimal, a disadvantage of threshold sampling is that it only guarantees that $k$ samples are taken \emph{in expectation}. Ideally, we want a scheme that samples \emph{exactly} $k$ items, while still sampling with probabilities (approximately) proportional to the weights $w_1, \ldots, w_n$. Many such schemes exist, including pivotal sampling, reservoir sampling methods, and conditional Poisson sampling \cite{Tille:2023}. In computer science, one method of particular interest is \emph{Priority Sampling}, which was introduced to the field by \cite{DuffieldLundThorup:2004}, but also studied in statistics under the name ``Sequential Poisson Sampling'' \cite{Ohlsson:1998}. Similar to threshold sampling,  priority sampling is often preferred in computer science over methods like pivotal sampling due to applications in coordinated random sampling. 

In fact, priority sampling is almost identical to threshold sampling. The one (major) difference is that the threshold $\tau$ is chosen adaptively to equal the $(k+1)^\text{st}$ smallest item in the list $\{\frac{u_1}{w_1}, \ldots, \frac{u_n}{w_n}\}$. Let $\mathcal{S}$ contain all values of $i$ such that $\frac{u_i}{w_i} < \tau$ (i.e., the indices of the $k$ smallest items in the list).\footnote{We note that some papers, including the early work in~\cite{DuffieldLundThorup:2004}, define as ``priorities'' $\frac{w_1}{u_1}, \ldots, \frac{w_n}{u_n}$ and select the $k$ indices with the \emph{largest} priority. This is of course equivalent to selecting the indices with the smallest values of $\frac{u_i}{w_i}$.
}
We define
\begin{align}
\label{eq:priority_sampling}
     \hat{w}_{i} = \begin{cases}\frac{w_i}{\min(1, w_i \tau )} & i\in S\\
     0 & i\notin S
     \end{cases}.
\end{align}
As before, to estimate the sum of a subset $\mathcal{I}\subseteq \{1, \ldots, n\}$, 
we return $\sum_{i\in \mathcal{I}}\hat{w}_i$. Analyzing this estimator is trickier than 
threshold sampling because $\tau$ is now a random number that depends on $u_1,\ldots, u_n$. As a result, $\hat{w}_1, \ldots, \hat{w}_n$ are no longer independent random variables. However, the following (surprising) fact was proven in \cite{DuffieldLundThorup:2007} (we include a complete proof in \Cref{sec:app} as well):
\begin{restatable}{fact}{expcov}
\label{fact:expcov}
    Let $\hat{w}_1, \ldots, \hat{w}_n$ be as defined in \eqref{eq:priority_sampling}. For all $i$, $\E[\hat{w}_i] = w_i$ and for all $i\neq j$, $\E[\hat{w}_i\hat{w}_j] = w_iw_j$. In other words, the random variables are equal to $w_1, \ldots, w_n$ in expectation, and are pairwise uncorrelated.
\end{restatable}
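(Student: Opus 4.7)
The plan is to prove both identities by a ``leave-one-out'' (resp.\ ``leave-two-out'') trick that decouples the random threshold $\tau$ from the variables we are integrating over. Although $\tau$ depends intricately on \emph{all} the $u_\ell$, on the event $\{i \in S\}$ it coincides with an order statistic of the $u_\ell$ with $\ell \neq i$, which is independent of $u_i$. This is the only structural fact needed; everything else is a short computation.

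For the first identity, let $\tau_i$ be the $k$-th smallest of the $n-1$ values $\{u_j/w_j : j \neq i\}$. A short order-statistic argument yields the key equivalence: $i \in S$ if and only if $u_i/w_i < \tau_i$, and on this event $\tau = \tau_i$. (One checks the three cases where $u_i/w_i$ is ranked in the top $k$ overall, at position $k+1$, or strictly below; in the first case the $k$-th smallest of the others is exactly the $(k+1)$-st smallest overall.) Since $\tau_i$ is a function of $\{u_j\}_{j\neq i}$, it is independent of $u_i \sim \unif[0,1]$, so conditioning on $\tau_i = t$,
\begin{align*}
\E[\hat{w}_i \mid \tau_i = t] = \Pr[u_i < w_i t] \cdot \frac{w_i}{\min(1, w_i t)} = \min(1, w_i t) \cdot \frac{w_i}{\min(1, w_i t)} = w_i,
\end{align*}
and taking the outer expectation over $\tau_i$ gives $\E[\hat{w}_i] = w_i$.

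For the covariance identity, I apply the analogous construction omitting two indices. Let $\tau_{ij}$ be the $(k-1)$-st smallest of the $n-2$ values $\{u_\ell/w_\ell : \ell \neq i, j\}$. The same counting argument shows that $i \in S$ and $j \in S$ simultaneously if and only if $u_i/w_i < \tau_{ij}$ and $u_j/w_j < \tau_{ij}$, and on this event $\tau = \tau_{ij}$. Since $\tau_{ij}$ is a function of $\{u_\ell\}_{\ell \neq i, j}$, it is independent of the pair $(u_i, u_j)$, and $u_i, u_j$ are themselves independent, so the two indicator events factorize once we condition on $\tau_{ij} = t$:
\begin{align*}
\E[\hat{w}_i \hat{w}_j \mid \tau_{ij} = t] = \min(1, w_i t) \cdot \min(1, w_j t) \cdot \frac{w_i w_j}{\min(1, w_i t)\min(1, w_j t)} = w_i w_j,
\end{align*}
and the outer expectation yields $\E[\hat{w}_i \hat{w}_j] = w_i w_j$.

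The main step requiring care is the order-statistic identity that pins $\tau$ to $\tau_i$ (resp.\ $\tau_{ij}$) on the sampling event; everything else is bookkeeping. The boundary case $u_i/w_i = \tau_i$ has probability zero by continuity of the $u_\ell$'s, and the corner cases where the clamp $\min(1, w_i t) = 1$ is active, or where some $w_i = 0$ (so $i$ is never sampled and $\hat{w}_i \equiv 0 = w_i$), are absorbed into the algebra without complication.
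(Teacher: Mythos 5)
Your proof is correct and follows essentially the same route as the paper's: both introduce the leave-one-out variable $\tau_i$ (the $k$-th smallest of $\{u_j/w_j : j \neq i\}$) and the leave-two-out variable $\tau_{ij}$ (the $(k-1)$-st smallest over $\ell \neq i,j$), observe that on the sampling events $\tau$ coincides with these order statistics, and then condition on them to decouple $u_i$ (resp.\ $u_i, u_j$). The only difference is that you spell out the order-statistic case analysis and the corner cases a bit more explicitly than the paper does, which is fine but not a new idea.
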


It follows that for any subset $\mathcal{I}$, $\mathbb{E}\left[\sum_{i\in \mathcal{I}} \hat{w}_i\right] = \sum_{i\in \mathcal{I}} {w}_i$. So, samples collected via priority sampling can be used to obtain an unbiased estimate for subset sums. Additionally, since $\hat{w}_1,\ldots, \hat{w}_n$ are pairwise uncorrelated, we have that $\Var[\sum_{i\in \mathcal{I}} \hat{w}_i] = \sum_{i\in \mathcal{I}} \Var[\hat{w}_i]$, as was the case for threshold sampling. So, a natural goal is still to bound the total variance $\sum_{i=1}^n \Var[\hat{w}_i]$. It was shown in \cite{AlonDuffieldLund:2005} that $\sum_{i=1}^n \Var[\hat{w}_i] = O\left(\frac{W^2}{k}\right)$, where $W = \sum_{i=1}^n w_i$. This matches the $\frac{W^2}{k}$ bound for threshold sampling up to a constant factor. However, it was conjectured in that work that the bound could be improved to $\frac{W^2}{k-1}$, which is only \emph{just worse} than the optimal $\frac{W^2}{k}$. This conjecture was resolved in a 2006 paper by Szegedy:
\begin{theorem}[\cite{Szegedy:2006}, Thm. 4]
\label{thm:mainvar}
 Let $\hat{w}_1, \ldots, \hat{w}_n$ be as defined in \eqref{eq:priority_sampling}, let $\hat{W} = \sum_{i=1}^n \hat{w}_i$, and let ${W} = \sum_{i=1}^n {w}_i$.
 \vspace{-1em}
 \begin{align*}
     \Var[\hat{W}] =\sum_{i=1}^n \Var[\hat{w}_i] \leq \frac{W^2}{k-1}.
 \end{align*}
\end{theorem}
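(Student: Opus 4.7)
The plan is to use a leave-one-out argument that decouples the randomness of the sampling threshold from the randomness governing whether item $i$ is sampled. For each $i$, define $T_i$ to be the $k$-th smallest value in the multiset $\{u_j/w_j : j \neq i\}$. A short case analysis on whether $u_i/w_i$ lies among the $k$ smallest of all $n$ priorities shows that $u_i/w_i < \tau$ if and only if $u_i/w_i < T_i$, and that whenever $i$ is sampled we have $\tau = T_i$ exactly. Since $T_i$ is a function only of $\{u_j : j \neq i\}$, it is independent of $u_i$, and we may rewrite
\[
\hat{w}_i = \mathbbm{1}\!\left[\frac{u_i}{w_i} < T_i\right] \cdot \frac{w_i}{\min(1, w_i T_i)}.
\]

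Conditional on $T_i$, $\hat{w}_i$ is exactly the threshold sampling estimator for a single item with fixed threshold $T_i$, since $u_i$ remains uniform on $[0,1]$ independent of $T_i$. Splitting into cases $w_i T_i \geq 1$ (where $\hat{w}_i = w_i$ deterministically) and $w_i T_i < 1$ (where $\hat{w}_i = 1/T_i$ with probability $w_i T_i$ and $0$ otherwise), a direct computation yields $\Var[\hat{w}_i \mid T_i] \leq w_i/T_i$, and hence $\Var[\hat{w}_i] \leq w_i \, \E[1/T_i]$.

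To bound $\E[1/T_i]$, observe that $T_i$ is precisely the threshold produced by a run of priority sampling on the $n-1$ weights $\{w_j : j \neq i\}$ targeting $k-1$ samples. By \Cref{fact:expcov} applied to this leave-one-out run, the resulting estimator $\hat{W}_{-i}$ of $W - w_i$ is unbiased. Letting $H_{-i} = \{j \neq i : w_j T_i \geq 1\}$, every item in $H_{-i}$ is automatically sampled, so $\hat{W}_{-i} = \sum_{j \in H_{-i}} w_j + (k-1-|H_{-i}|)/T_i$. Taking expectations and using the pointwise inequality $\sum_{j \in H_{-i}} w_j \geq |H_{-i}|/T_i$ (since $w_j \geq 1/T_i$ on $H_{-i}$) yields $(k-1)\E[1/T_i] \leq W - w_i$. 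Summing gives
\[
\sum_{i=1}^n \Var[\hat{w}_i] \leq \frac{1}{k-1}\sum_{i=1}^n w_i(W - w_i) = \frac{W^2 - \sum_{i=1}^n w_i^2}{k-1} \leq \frac{W^2}{k-1}.
\]
The main obstacle is handling the heavy-item regime $w_j T_i \geq 1$: without heavy items each step collapses to a one-line computation (in particular $\hat{W}_{-i} = (k-1)/T_i$ is deterministic given $T_i$, immediately giving $\E[1/T_i] = (W-w_i)/(k-1)$), but with them one must carefully account for the automatically-sampled items and invoke the $w_j \geq 1/T_i$ inequality to close the bound.
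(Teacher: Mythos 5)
Your proof is correct and takes essentially the same route as the paper's: you prove $\Var[\hat{w}_i] \leq w_i\,\E[1/T_i]$ (the paper's \Cref{variance_bound_item}, with $T_i = \tau_i$), then bound $\E[1/T_i]$ by viewing $T_i$ as the threshold of a leave-one-out priority sampling run, observing pointwise that each $\hat{w}_j \geq 1/T_i$ so $\hat{W}_{-i} \geq (k-1)/T_i$, and taking expectations using \Cref{fact:expcov} (this is exactly \Cref{one_over_tau_expectation}–\Cref{one_over_tau_i_expectation}; your $H_{-i}$ bookkeeping is an expanded version of the paper's one-line identity $\hat{w}_j = \max(w_j, 1/\tau)$). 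One small bonus: you retain $\E[1/T_i] \leq (W-w_i)/(k-1)$ rather than relaxing to $W/(k-1)$, yielding the slightly tighter $(W^2 - \sum_i w_i^2)/(k-1)$, which is a piece of \Cref{clm:tighter} in \Cref{app:refinement_thrm}.
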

We note that such a bound is also known to hold for other related sampling methods amenable to sample coordination, like the  successive weighted sampling without replacement (PPSWOR) method \cite{Cohen:2015}.

Szegedy's proof of \Cref{thm:mainvar} is quite involved, as it is based on an explicit integral formula for the total variance, and several pages of detailed calculations. We provide a simple alternative proof below. It is important to note that Szegedy's proof actually gives an ``instance optimality'' result, which can be stronger than \Cref{thm:mainvar} for some sets of weights. We discuss the difference in detail in \Cref{app:refinement_thrm}.


\section{Main Analysis}

\label{sec:main}
As in prior work (e.g. \cite{DuffieldLundThorup:2007}) we introduce a new random variable $\tau_i$ for each item $i$. $\tau_i$ is equal to the $k^\text{th}$ smallest value of $\frac{u_j}{w_j}$ for $j\in \{1, \ldots, n\} \setminus\{i\}$. Note that $\tau_i$ is independent from $u_i$, and the probability that $i$ is included in our set of $k$ samples $\mathcal{S}$ is exactly equal to $\Pr[\frac{u_i}{w_i}\leq \tau_i] = \min(1, \tau_iw_i)$. 
Moreover, conditioned on the event that $i\in S$, we have that  $\tau = \tau_i$. Accordingly, for $i\in S$, $\frac{w_i}{\min(1, \tau w_i)} = \frac{w_i}{\min(1, \tau_i w_i)}$, 
and thus $\hat{w}_i$ can equivalently be written as: 
\vspace{-.3em}
\begin{align*}
     \hat{w}_{i} = \left\{\begin{matrix}\frac{w_i}{\min(1, \tau_i w_i)} \quad\quad\quad\quad i\in S
 \\0 \hfill i\notin S
\end{matrix}\right.
\end{align*}
 \vspace{-1em}
 
\noindent With this definition in place, we prove some intermediate claims. 
\begin{claim}
\label{variance_bound_item}
    $\Var\left[\hat{w}_i\right] \leq w_i\cdot\E\left[\frac{1}{\tau_i}\right]$
\end{claim}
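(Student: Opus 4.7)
The plan is to condition on $\tau_i$ and exploit its independence from $u_i$. Since $\tau_i$ depends only on the variables $u_j, w_j$ for $j \neq i$, once we fix $\tau_i$ the only remaining randomness in $\hat{w}_i$ comes from $u_i$, which determines whether $i \in S$. Specifically, conditioned on $\tau_i$, the variable $\hat{w}_i$ equals $\frac{w_i}{\min(1, \tau_i w_i)}$ with probability $\min(1, \tau_i w_i)$ and equals $0$ otherwise, so a direct computation yields
\begin{align*}
\E\!\left[\hat{w}_i^2 \,\middle|\, \tau_i\right] \;=\; \frac{w_i^2}{\min(1, \tau_i w_i)^2}\cdot \min(1, \tau_i w_i) \;=\; \frac{w_i^2}{\min(1, \tau_i w_i)}.
\end{align*}

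Taking expectation over $\tau_i$ and using $\E[\hat{w}_i] = w_i$ from \Cref{fact:expcov}, I would then write
\begin{align*}
\Var[\hat{w}_i] \;=\; \E\!\left[\frac{w_i^2}{\min(1,\tau_i w_i)}\right] - w_i^2 \;=\; \E\!\left[\frac{w_i^2}{\min(1,\tau_i w_i)} - w_i^2\right].
\end{align*}

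The last step is a simple pointwise bound on the integrand by $w_i / \tau_i$, handled by splitting into two cases. When $\tau_i w_i \geq 1$, the minimum equals $1$ and the integrand is exactly $0$, which is trivially $\leq w_i/\tau_i$. When $\tau_i w_i < 1$, the integrand equals $\frac{w_i}{\tau_i} - w_i^2$, and since $w_i^2 \geq 0$ this is again at most $\frac{w_i}{\tau_i}$. Combining, $\Var[\hat{w}_i] \leq \E[w_i/\tau_i] = w_i \cdot \E[1/\tau_i]$, as desired.

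There is no real obstacle here; the only subtlety is remembering to use the alternative representation of $\hat{w}_i$ in terms of $\tau_i$ (rather than $\tau$) so that the independence between $u_i$ and the thresholding variable can be exploited cleanly. Once this representation is in place, the argument reduces to a one-line conditional second-moment calculation followed by a two-case pointwise inequality.
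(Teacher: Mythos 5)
Your proposal is correct and follows essentially the same route as the paper: condition on $\tau_i$, use $\Pr[i\in S \mid \tau_i] = \min(1,\tau_i w_i)$ to compute $\E[\hat{w}_i^2 \mid \tau_i] = w_i^2/\min(1,\tau_i w_i)$, subtract $\E[\hat{w}_i]^2 = w_i^2$, and then bound the resulting nonnegative integrand by $w_i/\tau_i$. Your explicit two-case split ($\tau_i w_i \geq 1$ versus $\tau_i w_i < 1$) is in fact a slightly cleaner rendition of the paper's argument, which packages the same observation into $\max(\cdot,\cdot)$ notation (and in doing so has a small typographical slip, writing an equality where an inequality is meant before the final $\leq$ step).
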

\begin{proof}
We begin by analyzing $\E\left[\hat{w}_i^2\right]$. 
    Conditioning on $\tau_i$, we have:
    \begin{align*}
        \E\left[\hat{w}_i^2 \mid \tau_i\right] = \frac{w_i^2}{\min(1, \tau_i w_i)^2} \cdot \Pr[i \in S]=\frac{w_i^2}{\min(1, \tau_i w_i)^2} \cdot \min(1, \tau_i w_i) = \frac{w_i^2}{\min(1, \tau_i w_i)} = w_i^2 \cdot \max\left(1, \frac{1}{\tau_i w_i}\right).
    \end{align*}
    From the law of total expectation, we thus have that $\E\left[\hat{w}_i^2\right] = w_i^2 \cdot \E\left[\max\left(1, \frac{1}{\tau_i w_i}\right)\right]$.
    Combined with the fact that $\E\left[\hat{w}_i\right]^2 = w_i^2$ (from \Cref{fact:expcov}) we have that $\Var\left[\hat{w}_i\right] = \E\left[\hat{w}_i^2\right] - \E[w_i]^2$ equals:
    \begin{align*}
        \Var\left[\hat{w}_i\right] &= w_i^2 \cdot \left(\E\left[\max\left(1, \frac{1}{\tau_i w_i}\right)\right] - 1\right) = w_i^2 \cdot \E\left[\max\left(0, \frac{1}{\tau_i w_i} -1\right)\right] = w_i^2 \cdot \E\left[\max\left(0, \frac{1}{\tau_i w_i}\right)\right].
    \end{align*}
    And since $\tau_i\cdot w_i$ is non-negative, we obtain:
    \begin{align*}
        \Var\left[\hat{w}_i\right] \leq w_i^2 \cdot \E\left[\frac{1}{\tau_i w_i}\right] = w_i \cdot \E\left[\frac{1}{\tau_i}\right].&\qedhere
    \end{align*}
\end{proof}
\begin{claim}
\label{one_over_tau_expectation}
    $\E\left[\frac{1}{\mathcal{\tau}}\right] \leq \frac{W}{k}$.
\end{claim}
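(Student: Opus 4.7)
The plan is to leverage the fact that priority sampling selects exactly $k$ items, so the identity $k = \sum_{i=1}^n \mathbbm{1}[i \in S]$ holds deterministically. Dividing by $\tau$ and taking expectations will turn the target inequality $\E[1/\tau] \leq W/k$ into a sum of per-item terms, each of which I can bound using the independence of $\tau_i$ from $u_i$ that already drove the proof of \Cref{variance_bound_item}.

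The key structural observation is that on the event $\{i \in S\}$, one has $\tau = \tau_i$: the value $\tau$ is the $(k+1)$-st smallest among all $n$ values $u_j/w_j$, and removing an item realizing one of the $k$ smallest values shifts the $(k+1)$-st position down to the $k$-th, which is precisely $\tau_i$. Combined with the fact that $\mathbbm{1}[i \in S] = 0$ when $i \notin S$, this lets me rewrite
\begin{align*}
    \frac{k}{\tau} = \sum_{i=1}^n \frac{\mathbbm{1}[i \in S]}{\tau} = \sum_{i=1}^n \frac{\mathbbm{1}[i \in S]}{\tau_i}.
\end{align*}
Taking expectations, conditioning on $\tau_i$, and using $\Pr[i \in S \mid \tau_i] = \min(1, w_i \tau_i)$, each term becomes
\begin{align*}
    \E\!\left[\frac{\mathbbm{1}[i \in S]}{\tau_i}\right] = \E\!\left[\frac{\min(1, w_i \tau_i)}{\tau_i}\right] = \E\!\left[\min\!\left(\tfrac{1}{\tau_i}, w_i\right)\right] \leq w_i,
\end{align*}
and summing over $i$ yields $\E[k/\tau] \leq W$, which rearranges to the claimed bound.

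I do not anticipate a real obstacle. The one subtlety is justifying $\tau = \tau_i$ on $\{i \in S\}$, but this is a direct combinatorial fact about order statistics with a single element removed; the rest of the argument is a one-line application of the law of total expectation combined with the trivial inequality $\min(\cdot, w_i) \leq w_i$. Notably, the whole argument reuses exactly the conditional-expectation machinery already set up for the previous claim, so the proof should fit in just a few lines.
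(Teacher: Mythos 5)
Your proof is correct, and it takes a genuinely different (though closely related) route from the paper. The paper's proof observes the pointwise inequality $k/\tau \le \hat W$ (since $\hat W = \sum_{i\in S}\max(w_i,1/\tau) \ge \sum_{i\in S}1/\tau = k/\tau$) and then invokes $\E[\hat W]=W$ from \Cref{fact:expcov}, so the whole argument is a one-line comparison. You instead decompose $k/\tau = \sum_i \mathbbm{1}[i\in S]/\tau_i$ (using $\tau=\tau_i$ on $\{i\in S\}$) and bound each term directly by conditioning on $\tau_i$, getting $\E[\mathbbm{1}[i\in S]/\tau_i]=\E[\min(1/\tau_i,w_i)]\le w_i$. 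This is, in effect, unrolling the per-item conditioning computation that the paper buries in the appendix proof of \Cref{fact:expcov}: note that $\mathbbm{1}[i\in S]/\tau_i \le \mathbbm{1}[i\in S]\max(w_i,1/\tau_i) = \hat w_i$, so you are re-deriving a slightly weaker form of $\E[\hat w_i]=w_i$ inline rather than citing it. The paper's version is shorter once \Cref{fact:expcov} is in hand; your version is self-contained for this claim and does not require the reader to have internalized the unbiasedness fact, at the cost of repeating the conditioning machinery. Both land on $\E[k/\tau]\le W$, and both are sound.
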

\begin{proof}
Consider the random variable $\hat{W} = \sum_{i=1}^{n} \hat{w}_{i}$. 
Note that $\hat{W}$ can be rewritten as:
\begin{align*}
\hat{W} = \sum_{i\in S} \hat{w}_{i} = \sum_{i\in S} \frac{w_i}{\min\left(1,\tau w_i\right)} = \sum_{i\in S} \max\left(w_i, \frac{1}{\tau}\right). 
\end{align*}
Hence, 
$\hat{W} \geq \sum_{i\in S} \frac{1}{\tau} = \frac{k}{\tau}$.
We also know from
\Cref{fact:expcov} that $\E[\hat{W}] = W$.
If the random variable $\hat{W}$ is always larger than the random variable $\frac{k}{\tau}$, it holds that:
$
\E\left[\frac{k}{\tau}\right] \leq \E\left[\hat{W}\right] = W.$
Dividing by $k$ proves the result.
\end{proof}

\begin{claim}
\label{one_over_tau_i_expectation}
$\E\left[\frac{1}{\tau_i}\right] \leq \frac{W}{k-1}$
\end{claim}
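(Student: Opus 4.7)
The approach is to recognize that $\tau_i$ has exactly the same distribution as the threshold $\tau$ in a priority sampling scheme applied to the reduced problem where item $i$ has been removed and the sample size is decreased by one. In this reduced problem, we are sampling from the $n-1$ items $\{w_j : j \neq i\}$ with sample size $k' = k-1$, so the threshold $\tau'$ is the $(k'+1)^\text{st} = k^\text{th}$ smallest value among $\{u_j/w_j : j \neq i\}$, which is precisely $\tau_i$.

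Once this observation is in place, the plan is simply to invoke \Cref{one_over_tau_expectation} on the reduced instance. That claim was proved using only two facts: (i) that the priority-sampling estimator $\hat{W}$ is unbiased for the total weight, which follows from \Cref{fact:expcov} and holds for any number of items and any sample size, and (ii) the deterministic inequality $\hat{W} \geq k/\tau$ obtained from $\hat{w}_i = \max(w_i, 1/\tau)$ for $i \in S$. Both of these carry over verbatim to the reduced instance, so \Cref{one_over_tau_expectation} gives
\begin{align*}
\E\!\left[\frac{1}{\tau_i}\right] \;\leq\; \frac{W - w_i}{k-1} \;\leq\; \frac{W}{k-1},
\end{align*}
which is the bound we want.

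The only mild obstacle is to make sure the reduction is legitimate: one has to verify that the threshold in the reduced priority sampling scheme really is $\tau_i$ (being the $k^\text{th}$ smallest of $n-1$ ratios), and that the reduced instance has enough items, i.e. $n-1 \geq k$, so that a $(k')^\text{th}+1$ smallest actually exists. The first point is immediate from the definition of priority sampling, and the second is a trivial non-degeneracy assumption already implicit in the setup. Everything else in the proof is a one-line invocation of the preceding claim.
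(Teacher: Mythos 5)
Your proof is correct and is essentially identical to the paper's: both reduce to the auxiliary instance on the $n-1$ weights $\{w_j : j \neq i\}$ with sample size $k-1$, observe that the corresponding threshold is exactly $\tau_i$, and then invoke \Cref{one_over_tau_expectation} together with $W - w_i \leq W$. Your added remarks about verifying the reduction (that $\tau_i$ is the $k^{\text{th}}$ smallest of $n-1$ ratios, and that $n-1 \geq k$) are a nice bit of extra care but do not change the argument.
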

\begin{proof}
Simply apply \Cref{one_over_tau_expectation} to the setting where we collect $k-1$ priority samples from the set of weights $\{w_j : j\in \{1,\ldots, n\}\setminus \{i\}\}$. Note that $W'=\sum_{j=1,j\neq i}^{n}w_j$ is no larger than $W$, so $\frac{W'}{k-1} \leq \frac{W}{k-1}$.
\end{proof}
We are now ready to prove the result of \cite{Szegedy:2006}.
\begin{proof}[Proof of \Cref{thm:mainvar}]
    Applying \Cref{variance_bound_item} and \Cref{one_over_tau_i_expectation}, we have that:
    \begin{align*}
    \sum_{i=1}^{n}\Var\left[\hat{w}_i\right] \leq \sum_{i=1}^{n} w_i\cdot\E\left[\frac{1}{\tau_i}\right] \leq \sum_{i=1}^{n} w_i \frac{W}{k-1} = \frac{W^2}{k-1}.&\qedhere
    \end{align*}
\end{proof}

\section{Discussion and Pedagogical Perspective}
It is natural to ask how the proof above avoids the complexity of \cite{Szegedy:2006}. In fact, it is even simpler than the proof that establishes a looser $O(W^2/k)$ bound from \cite{AlonDuffieldLund:2005}, which invokes a bucketing argument combined with Chernoff bounds. Where's the magic? We do not have a fully satisfying answer, except to point out that a key step in our proof is to reduce the problem to bounding $\mathbb{E}[1/\tau]$. At first glance, this does not seem productive: as the $k^\text{th}$ smallest value of $n$ \emph{scaled} uniform random variables, $\tau$ is a complicated random variable. In particular, its distribution depends on each of $w_1, \ldots, w_n$ in an involved way. However, as we show in \Cref{one_over_tau_expectation}, a simple comparison argument can be used to upper bound $\mathbb{E}[1/\tau]$ without even writing down the probability density function (PDF) of $\tau$. 

This analysis might be interesting from a pedagogical perspective even when all weights are uniform. In this case, $\tau$ is the $(k+1)^\text{st}$ smallest out of $n$ uniform draws. Such random variables appear frequently in course material on randomized algorithms, for example in analyzing the elegant distinct elements algorithm from \cite{Bar-YossefJayramKumar:2002} or when studying the $k$-minimum values (KMV) sketch. In these applications, it is necessary to compute the expected value and variance of $1/\tau$, which typically involves an explicit expression for the PDF of $\tau$ (which is beta distributed), combined with involved calculations \cite{BeyerHaasReinwaldw:2007}. Our approach, on the other hand, gives a simple argument from first principles, which we outline below.

\begin{corollary}\label{cor:unif}
    Let $\tau$ be the $(k+1)^\text{st}$ smallest out of $n$ uniform random variables $u_1, \ldots, u_n$ on $[0,1]$. 
    \begin{align*}
        \E\left[\frac{1}{\tau}\right] &= \frac{k}{n} & &\text{and} & \Var\left[\frac{1}{\tau}\right] &= \frac{n^2 - nk}{k^2(k-1)}.
    \end{align*}
\end{corollary}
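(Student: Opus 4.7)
The plan is to specialize the comparison argument used in the proofs of Claim \ref{one_over_tau_expectation} and Claim \ref{variance_bound_item} to the case of uniform weights $w_1 = \cdots = w_n = 1$, where the key inequalities tighten to equalities, and then bootstrap to recover the second moment.

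For the expectation, the observation is that with unit weights the priorities are simply $u_1, \ldots, u_n$, so $\tau < 1$ almost surely (assuming $k+1 \leq n$) and the truncation $\min(1, \tau w_i)$ in the definition of $\hat{w}_i$ is never active. Hence $\hat{w}_i = 1/\tau$ for every $i \in S$, and the bound $\hat{W} \geq k/\tau$ from the proof of Claim \ref{one_over_tau_expectation} sharpens to the exact identity $\hat{W} = k/\tau$. Taking expectations and using $\E[\hat{W}] = W = n$ from Fact \ref{fact:expcov} gives $\E[1/\tau] = n/k$.

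For the variance, I would trace through the proof of Claim \ref{variance_bound_item} with $w_i = 1$: both the conditional-expectation computation and the last inequality become equalities, since $\tau_i < 1$ almost surely (as $\tau_i$ is the $k^\text{th}$ smallest of $n-1$ uniforms) forces $\max(0, 1/\tau_i - 1) = 1/\tau_i - 1$. This yields the exact formula $\Var[\hat{w}_i] = \E[1/\tau_i] - 1$. Applying the expectation identity from the previous step to the reduced problem with $n-1$ items and $k-1$ samples gives $\E[1/\tau_i] = (n-1)/(k-1)$, so $\Var[\hat{w}_i] = (n-k)/(k-1)$. Summing over $i$ using pairwise uncorrelatedness from Fact \ref{fact:expcov} gives $\Var[\hat{W}] = n(n-k)/(k-1)$. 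Finally, using the identity $\hat{W} = k/\tau$ to convert $\Var[\hat{W}]$ into $k^2 \Var[1/\tau]$ yields $\Var[1/\tau] = n(n-k)/(k^2(k-1))$, matching the stated formula.

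The only step requiring care is verifying that every inequality in the earlier proofs collapses to an equality under uniform weights, which reduces to the observation that $\tau, \tau_i < 1$ almost surely so no truncation ever fires. After that, the argument is essentially bookkeeping, and — as emphasized by the authors — no beta-distribution integrals or explicit order-statistic PDF manipulations are invoked.
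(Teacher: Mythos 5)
Your proof is correct, and it shares the paper's central mechanism: introduce the auxiliary threshold $\tau_1$ (the $k^{\text{th}}$ smallest of the other $n-1$ uniforms) to avoid the PDF of $\tau$, and bootstrap $\E[1/\tau_1]=(n-1)/(k-1)$ by applying the first-moment identity to a smaller instance. The two arguments are organized differently, though. The paper isolates a single random variable $X = (1/\tau)\cdot\mathbbm{1}[1\in\mathcal{S}]$, computes $\E[X]$ and $\E[X^2]$ via the $\tau_1$-representation, and then recovers $\E[1/\tau]$ and $\E[1/\tau^2]$ by observing that $\tau$ is independent of the event $\{1\in\mathcal{S}\}$, which has probability $k/n$ by symmetry; it never sums over $i$ and does not need the pairwise uncorrelatedness of \Cref{fact:expcov} for the second moment. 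You instead specialize \Cref{variance_bound_item} and \Cref{one_over_tau_expectation} to unit weights, note that every inequality tightens to equality because $\tau,\tau_i<1$ almost surely, aggregate $\Var[\hat{w}_i]=(n-k)/(k-1)$ over all $i$ using the uncorrelatedness from \Cref{fact:expcov}, and then convert back via the almost-sure identity $\hat W = k/\tau$, which gives $\Var[\hat W]=k^2\Var[1/\tau]$. Your route makes the ``inequalities collapse to equalities'' structure of the main analysis completely explicit and reuses the earlier claims verbatim, at the modest cost of a detour through $\Var[\hat W]$ and an extra appeal to uncorrelatedness; the paper's single-variable computation is a touch more self-contained. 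Both avoid any explicit order-statistic or beta-distribution calculation, which is the point of the corollary.
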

%
\begin{proof}
Let $\mathcal{S}$ denote the set of $k$ indices $i$ for which $u_i < \tau$. Additionally, let $\tau_1$ equal the $k^\text{th}$ smallest out of $\{u_1, \ldots, u_n\}\setminus\{u_1\}$. There is nothing special about the choice of $1$: $\tau_1$ is simply an auxilary variable used in our analysis. We could have instead chosen the $k^\text{th}$ smallest out of $\{u_1, \ldots, u_n\}\setminus\{u_i\}$ for any $i$. Consider the random variable $X$ defined equivalently (using the same argument as in the previous section) as:
$$
X = \begin{cases}
\frac{1}{\tau} & \text{ if } 1\in \mathcal{S} \\
0 & \text{ otherwise}\end{cases} =\begin{cases}\frac{1}{\tau_1} & \text{ if } 1\in \mathcal{S} \\0 & \text{ otherwise.}\end{cases}
$$
From the second definition, we observe that: 
$$
\mathbb{E}[X] 
= \E\left[ \mathbb{E}\left[X\mid \tau_1\right] \right]
= \E\left[ \Pr[1\in \mathcal{S}\mid\tau_1]\cdot \frac{1}{\tau_1}\right]
= 1.
$$
Alternatively, consider the first definition. The value of $\tau$ is independent from the event that $1\in \mathcal{S}$, and by symmetry, $\Pr[1\in \mathcal{S}] = \frac{k}{n}$. So,
$$
\mathbb{E}[X] = \mathbb{E}\left[\frac{1}{\tau}\right]\cdot \Pr[1\in \mathcal{S}] = \mathbb{E}\left[\frac{1}{\tau}\right]\cdot \frac{k}{n}.
$$
We conclude that in order for $\mathbb{E}\left[\frac{1}{\tau}\right]\cdot \frac{k}{n}$ to equal $1$, it must be that
\begin{align}\label{eq:unif_exp}
    \mathbb{E}\left[\frac{1}{\tau}\right] = \frac{n}{k}.
\end{align}
We can then compute the variance of ${1}/{\tau}$ using a similar argument (and applying \eqref{eq:unif_exp} in the last step):
\begin{align*}
\mathbb{E}[X^2] = \E\left[ \mathbb{E}[X^2\mid \tau_1]\right] = \E\left[ \Pr[1\in \mathcal{S}\mid\tau_1]\cdot \frac{1}{\tau_1^2}\right] 
&=E\left[\tau_1\cdot \frac{1}{\tau_1^2}\right] = \mathbb{E}\left[\frac{1}{\tau_1}\right] = \frac{n-1}{k-1}.
\end{align*}
Alternatively, again using that $\tau$ is independent from $\mathbbm{1}\left[1\in \mathcal{S}\right]$, we have that $\mathbb{E}[X^2] = \mathbb{E}\left[\frac{1}{\tau^2}\right] \cdot \frac{k}{n}$
So, we conclude that $\mathbb{E}\left[\frac{1}{\tau^2}\right] = \frac{n(n-1)}{k(k-1)}$. Finally, we have the bound:
\begin{align*}
    \Var\left[\frac{1}{\tau}\right] = \mathbb{E}\left[\frac{1}{\tau^2}\right] - \mathbb{E}\left[\frac{1}{\tau}\right]^2 
    = \frac{n^2 - nk}{k^2(k-1)}.&\qedhere
\end{align*}
\end{proof}
This matches the formula given e.g., in \cite{BeyerHaasReinwaldw:2007}, and establishes that $\Var\left[\frac{1}{\tau}\right] \leq \epsilon^2 \mathbb{E}\left[\frac{1}{\tau}\right]^2$ when $k = O(1/\epsilon^2)$, which is the bound needed to  prove that the \cite{Bar-YossefJayramKumar:2002} distinct elements method gives a $(1\pm \epsilon)$ relative error approximation with $k = O(1/\epsilon^2)$ space 

\section*{Acknowledgements} This work was supported by NSF Award CCF-204623. We thank Xiaoou Cheng for carefully reviewing our proof, and also the anonymous reviewers, whose comments helped improve the presentation of this paper. 

\appendix 
\section{Proof of \texorpdfstring{\Cref{fact:expcov}}{Fact expcov}}
\label{sec:app}
For completeness, we prove the following important and well-known fact about priority sampling, following the approach of existing proofs \cite{DuffieldLundThorup:2007}.
\expcov*
\begin{proof}
Let $\tau_1, \ldots, \tau_n$ be as defined in Section \ref{sec:main}. Recall that $\hat{w}_i$ can equivalently be written as: 
\begin{align*}
     \hat{w}_{i} = \left\{\begin{matrix}\frac{w_i}{\min(1, \tau_i w_i)} \quad\quad\quad\quad i\in S
 \\0 \hfill i\notin S
\end{matrix}\right.
\end{align*}
Then, we can compute its expectation:
\begin{align*}
    \E[w_i] &= \E\left[ \E\left[w_i \mid \tau_i\right] \right]
    = \E\left[\frac{w_i}{\min(1, \tau_i w_i)}\Pr[i \in \mathcal{S} \mid \tau_i] \right] = w_i.
\end{align*}
In the last step we use that $\Pr[i \in \mathcal{S} \mid \tau_i] = \min(1, \tau_i w_i)$, which follows from noting that, for $i$ to be in $\mathcal{S}$, it must be that $\frac{u_i}{w_i}$ is less than $\tau_i$.

Next we show that $\E\left[\hat{w}_i \hat{w}_j\right] = w_i w_j = \E\left[\hat{w}_i\right]\E\left[\hat{w}_j\right]$. 
Let $\tau_{i,j}$ denote the $(k-1)^\text{st}$ smallest value of $\frac{u_r}{w_r}$ for $r\in \{1,\ldots, n\} \setminus \{i,j\}$. 
    If either $i$ or $j$ is absent from the $S$, then either $\hat{w}_i$ or $\hat{w}_j$ is $0$. So we have:
    \begin{align*}
     \E[\hat{w}_i \hat{w}_j \mid \tau_{i,j}] = \frac{w_i}{\min(1, \tau_{i,j} w_i)} \frac{w_j}{\min(1, \tau_{i,j} w_j)} \Pr[i,j\in S \mid \tau_{i,j}]
    \end{align*}
    We can thus compute the overall expectation as:    
    \begin{align*}
    \E[\hat{w}_i \hat{w}_j] = \E\left[\E\left[\hat{w}_i \hat{w}_j \mid \tau_{i,j}\right]\right]
    &= \E\left[\frac{w_i}{\min(1, \tau_{i,j} w_i)}\cdot\frac{w_j}{\min(1, \tau_{i,j} w_j)}\Pr[i,j \in \mathcal{S} \mid \tau_{i,j}\right] \\
    &= \E\left[\frac{w_i}{\min(1, \tau_{i,j} w_i)}\cdot\frac{w_j}{\min(1, \tau_{i,j} w_j)}\min(1, \tau_{i,j} w_i) \cdot\min(1, \tau_{i,j} w_j)\right] 
    = w_i w_j.
\end{align*}
Above we have used the fact that, for $i$ and $j$ to both be in $s$, it must be that both $\frac{u_i}{w_i}$ and $\frac{u_j}{w_j}$ are less than $\tau_{i,j}$, which happens with probability $\min(1, \tau_{i,j} w_i)\cdot \min(1, \tau_{i,j} w_j)$.
\end{proof}

\section{Comparison to Szegedy's Result and a Refinement}
\label{app:refinement_thrm}
The goal of our work is to prove that the variance of Priority Sampling (which takes exactly $k$ samples) matches that of Threshold Sampling (which takes $k$ samples in expectation) up to a multiplicative factor of just $\frac{k}{k-1}$. In particular, our \Cref{thm:mainvar} proves a variance bound of $\frac{W}{k-1}$ for Priority Sampling, while we show a bound of $\frac{W}{k}$ at the top of page 2 for Threshold Sampling. However, this later bound is not quite tight for two reasons. In our short analysis of Threshold Sampling  to obtain the $\frac{W}{k}$ bound we:
\begin{enumerate}
    \item upper bound $(1-p_i)$ by $1$ for all $p_i$, and
    \item upper bound $\sum_{i \in \mathcal{K}}  w_i \leq \sum_{i \in 1}^n w_i = W$, where we recall that $\mathcal{K}$ contains all $i$ for which $w_i\tau \leq 1$.
\end{enumerate} 
While these steps only lead to minor differences when all $p_i$'s are small, they mean that Threshold Sampling can actually outperform the $W^2/k$ bound when some probabilities are close to $1$. 

A natural question, raised by \cite{DuffieldLundThorup:2007}, is if Priority Sampling still nearly matches Threshold Sampling in such cases. Szegedy answers this question affirmatively: his Theorem 2 proves that Priority Sampling matches Threshold Sampling up to the $\frac{k}{k-1}$ factor for any set of weights. This strengthens our \Cref{thm:mainvar} (his Theorem 4) and, since
Threshold Sampling is known to be optimal among a natural class of sampling methods for any set of weights (see \cite{DuffieldLundThorup:2007}), Szegedy establishes that Priority Sampling is essentially optimal as well.\footnote{The published version of \cite{Szegedy:2006} contains a typographical error suggesting that his Theorem 4 (equivalent to our \cref{thm:mainvar}) implies his Theorem 2; however, this is incorrect as Theorem 2 is in fact stronger and implies Theorem 4.}

While our simple proof approach does not recover Szegedy's full optimality result, we do show that it can obtain a somewhat tighter bound than \Cref{thm:mainvar}. In particular, we obtain the following bound: 
\begin{claim}
\label{clm:tighter}
Let $w_1 \geq \ldots, \geq w_n$ be weights and let $\hat{w}_1, \ldots, \hat{w}_n$ be as defined in \eqref{eq:priority_sampling}, let $\hat{W} = \sum_{i=1}^n \hat{w}_i$, and let ${W} = \sum_{i=1}^n {w}_i$.
 \begin{align*}
 \Var[\hat{W}] = \sum_{i=1}^{n} \Var\left[\hat{w}_i\right] \leq \frac{W^2 - \sum_{i=1}^{n}w_i^2}{k-1} - \sum_{i=k}^{n} w_i^2.
 \end{align*}
\end{claim}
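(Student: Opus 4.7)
The plan is to refine both intermediate claims used in the proof of \Cref{thm:mainvar}, preserving two terms that were discarded in the original argument. Revisiting \Cref{variance_bound_item}, instead of bounding $\max(0, \tfrac{1}{\tau_i w_i} - 1) \leq \tfrac{1}{\tau_i w_i}$ (which drops the $-1$), we retain it to obtain
\begin{align*}
    \Var[\hat{w}_i] \leq w_i \E\!\left[\tfrac{1}{\tau_i}\right] - w_i^2\, \Pr[\tau_i w_i < 1].
\end{align*}
Revisiting \Cref{one_over_tau_i_expectation}, its proof already establishes the sharper $\E[1/\tau_i] \leq (W - w_i)/(k-1)$ by applying \Cref{one_over_tau_expectation} to the $n-1$ weights $\{w_j\}_{j \neq i}$ with sample size $k-1$; we keep this form rather than relaxing $W - w_i$ to $W$. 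Summing the refined per-item bound and substituting yields
\begin{align*}
    \sum_{i=1}^n \Var[\hat{w}_i] \leq \frac{W^2 - \sum_{i=1}^n w_i^2}{k-1} - \sum_{i=1}^n w_i^2\, \Pr[\tau_i w_i < 1].
\end{align*}

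With the weights sorted $w_1 \geq \cdots \geq w_n$, the task reduces to lower bounding $\sum_{i=1}^n w_i^2\, \Pr[\tau_i w_i < 1]$ by $\sum_{i=k}^n w_i^2$. The easy half handles $i \geq k+1$ pointwise: for each such $i$, the $k$ indices $j = 1, \ldots, k$ are distinct from $i$ and satisfy $w_j \geq w_i$, so $u_j/w_j \leq 1/w_j \leq 1/w_i$ almost surely. Hence at least $k$ of the leave-one-out values $\{u_j/w_j : j \neq i\}$ fall below $1/w_i$, pinning the $k^{\text{th}}$ smallest $\tau_i \leq 1/w_i$ and giving $\Pr[\tau_i w_i < 1] = 1$. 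This recovers the subtraction $\sum_{i=k+1}^n w_i^2$.

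The main obstacle is the boundary index $i = k$, which contributes the extra $w_k^2$ needed to elevate the subtraction from $\sum_{i=k+1}^n w_i^2$ to the claimed $\sum_{i=k}^n w_i^2$. The pointwise argument above only finds $k-1$ indices $j \neq k$ with $w_j \geq w_k$, so $\Pr[\tau_k w_k \geq 1]$ can be genuinely positive for weight configurations with $w_{k+1}, \ldots, w_n$ well-separated from $w_k$. Closing this gap appears to require a joint argument across items — for instance, directly showing $\sum_i w_i^2\, \Pr[\tau_i w_i \geq 1] \leq \sum_{i=1}^{k-1} w_i^2$ by coupling the leave-one-out thresholds $\tau_i$ to the full priority-sampling threshold $\tau$ and exploiting the observation that $\{i : \tau_i w_i \geq 1\}$ forms a prefix $\{1, \ldots, M\}$ of the sorted indices with $M \leq k$, together with an identity tying $\E[\sum_{i=1}^M w_i^2]$ to $\sum_{i=1}^{k-1} w_i^2$ via the unbiasedness of $\hat{W}$.
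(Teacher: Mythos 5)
Your approach is essentially the paper's: you tighten \Cref{variance_bound_item} by retaining the $-1$ term, tighten \Cref{one_over_tau_i_expectation} by retaining $W - w_i$, and observe that for $i \geq k+1$ the event $\tau_i w_i < 1$ holds deterministically (the $k$ indices $j = 1, \dots, k$ are all distinct from $i$ and give $u_j / w_j \leq 1/w_j \leq 1/w_i$). The paper packages exactly this as \Cref{claim:tau_i_inverse_w_i,claim:tight_variance_w_hat,claim:tight_one_over_tau_i_expectation} and sums.

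However, you have identified a real gap, and --- notably --- the paper's own proof has the same one. \Cref{claim:tau_i_inverse_w_i} asserts $w_i \leq 1/\tau_i$ for all $i \geq k$, but its argument produces only $k-1$ qualifying indices when $i = k$, exactly as you point out. The claim is in fact false at $i = k$: with $n = 2$, $k = 1$, $w_1 = 1$, $w_2 = 1/10$, we get $\tau_1 = 10 u_2$, and $w_1 \leq 1/\tau_1$ fails whenever $u_2 > 1/10$. Consequently \Cref{claim:tight_variance_w_hat} is unjustified at $i = k$: the step amounts to $\E[\max(0, Y)] \leq \E[Y]$ for $Y = \tfrac{1}{\tau_k w_k} - 1$, which requires $Y \geq 0$ almost surely and hence $\tau_k w_k \leq 1$ almost surely, which fails. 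So the paper's argument, as written, rigorously delivers only the slightly weaker bound with $\sum_{i=k+1}^n w_i^2$ in place of $\sum_{i=k}^n w_i^2$ --- precisely the bound you obtain cleanly.

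Your sketched joint argument for the missing $w_k^2$ is not carried out, so your proof is also incomplete, but your diagnosis is accurate and, in effect, catches a bug in the paper. Your ``prefix'' observation does hold: if $i < j$ and $\tau_j w_j \geq 1$, then either $\tau_i \geq \tau_j$ and we are done, or $i \notin \mathcal{S}$ while $j \in \mathcal{S}$ (the only way to get $\tau_i < \tau_j$), which forces $1 \geq u_i \geq \tau w_i \geq \tau w_j = \tau_j w_j \geq 1$, equality throughout, a probability-zero event. But the maximal prefix index $M$ satisfies only $M \leq k$, not $M \leq k-1$, so $\E\bigl[\sum_{i \leq M} w_i^2\bigr] \leq \sum_{i < k} w_i^2$ is not a pointwise consequence, and it is not clear how to recover the needed slack from the identity $\E[\hat{W}] = W$ alone. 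The stated inequality is Szegedy's Theorem 7 and so is presumably true, but closing the $i = k$ case genuinely requires an idea beyond what either you or the paper's appendix provides; the safe, immediately rigorous version of the claim is the one with $\sum_{i=k+1}^n w_i^2$.
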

Note that above, and throughout the remainder of this section, we assume without loss of generality that the weights are sorted by magnitude. I.e., that $w_1 \geq \ldots, \geq w_n$.
\Cref{clm:tighter} was stated as Theorem 7 in \cite{Szegedy:2006}, which emphasizes that the result is ``still rather tight'' in comparison to the full optimality result. Importantly, the bound captures that fact that Priority Sampling performs better when there are a few large weights, in which case $\sum_{i=1}^{n}w_i^2$ can be a signifcant fraction of $W^2 = (\sum_{i=1}^{n}w_i)^2$.

We first prove some preliminary results towards establishing \Cref{clm:tighter}.

\begin{claim}
\label{claim:tau_i_inverse_w_i}
Let $\tau_i$ be as defined in \Cref{sec:main}. For every $i \geq k$, $w_i \leq \frac{1}{\tau_i}$.
\end{claim}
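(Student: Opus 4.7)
The plan is a short deterministic counting argument---no probabilistic reasoning is needed beyond the fact that each $u_j$ lies in $[0,1]$. Recall that $\tau_i$ is the $k^{\text{th}}$ smallest element of the multiset $\{u_j/w_j : j \neq i\}$, so the desired inequality $w_i \leq 1/\tau_i$ is equivalent to $\tau_i \leq 1/w_i$, which is in turn equivalent to exhibiting at least $k$ indices $j \neq i$ that satisfy $u_j/w_j \leq 1/w_i$.

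The key observation is the following simple monotonicity: whenever $w_j \geq w_i$, the bound $u_j \leq 1$ yields
\[
\frac{u_j}{w_j} \;\leq\; \frac{1}{w_j} \;\leq\; \frac{1}{w_i}.
\]
Using the sorted ordering $w_1 \geq w_2 \geq \cdots \geq w_n$, I would then plug in $i$ and look at the indices $j = 1, 2, \ldots, k$. For any $i \geq k+1$, these indices are all distinct from $i$, and each satisfies $w_j \geq w_k \geq w_i$. They therefore supply $k$ values in $\{u_j/w_j : j \neq i\}$ bounded above by $1/w_i$, which immediately forces $\tau_i \leq 1/w_i$ as required.

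The main place I would pause is the boundary $i = k$, since the argument just given only produces $k-1$ qualifying indices $j = 1, \ldots, k-1$ (each with $w_j \geq w_k = w_i$). To finish this edge case, I would either appeal to a tie convention among equal weights---any $j > k$ with $w_j = w_k$ contributes an additional qualifying index---or observe that the statement is used in \Cref{clm:tighter} on the range $i \geq k+1$, so that no loss is incurred by reading \Cref{claim:tau_i_inverse_w_i} as a bound over that range. Apart from this single boundary issue, the argument is purely combinatorial and uses nothing about the joint distribution of $u_1, \ldots, u_n$ beyond the pointwise bound $u_j \in [0,1]$; this is what makes the reduction in \Cref{variance_bound_item} pay off, turning a statement about an order statistic of scaled uniforms into a one-line deterministic comparison.
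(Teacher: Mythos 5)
Your argument for $i \geq k+1$ is correct and is essentially identical to the paper's: both proofs count the $k$ indices $j \le k$ (each with $w_j \ge w_k \ge w_i$) that contribute values $u_j/w_j \le 1/w_j \le 1/w_i$ to the pool defining $\tau_i$. You are also right to pause at $i=k$, and the concern is not cosmetic: once $j=k$ is removed from the pool only $k-1$ such indices remain, and $\tau_k$ genuinely can exceed $1/w_k$ --- namely whenever every $j>k$ has $u_j > w_j/w_k$, which happens with positive probability if $w_j < w_k$ for all $j > k$ (e.g.\ $n=3$, $k=2$, $w=(3,2,1)$, $u_3 > 1/2$). Interestingly, the paper's own proof glosses over exactly this boundary; it asserts ``there must be at least $k$ values of $u_j/w_j$ that are less than or equal to $1/w_k$'' without accounting for the fact that when $i=k$ the index $j=k$ is excluded from $\tau_i$'s pool, leaving only $k-1$ guaranteed values.

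That said, neither of your two proposed escapes actually holds. A tie convention rescues only the case $w_k = w_{k+1}$, which cannot be assumed without loss of generality. And your reading of \Cref{clm:tighter} is mistaken: its proof applies $\Var[\hat{w}_i] \le w_i \cdot \frac{W-w_i}{k-1} - w_i^2$ to every $i$ in the range $k \le i \le n$, which is precisely where the $-\sum_{i=k}^{n} w_i^2$ term comes from, so the claim is needed at $i=k$. The sound repair is to state \Cref{claim:tau_i_inverse_w_i} and \Cref{claim:tight_variance_w_hat} for $i \ge k+1$ only, and correspondingly weaken the improvement term in \Cref{clm:tighter} to $-\sum_{i=k+1}^{n} w_i^2$; the $i=k$ term cannot be recovered by this counting argument in general.
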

\begin{proof}
Recall that $\tau_i$ was defined as the $k^{\text{th}}$ smallest value of $\frac{u_j}{w_j}$ among all $j \neq i$. Recalling that weights are sorted so that $w_1 \geq \ldots, \geq w_n$, since each $u_j \leq 1$, there must be at least $k$ values of $\frac{u_j}{w_j}$ that are less than or equal to $\frac{1}{w_k}$. As such, $\tau_i \leq \frac{1}{w_k} \leq \frac{1}{w_i}$, which proves the claim.
\end{proof}

\begin{claim}
\label{claim:tight_variance_w_hat}
For every $i \geq k$, $\Var\left[\hat{w}_i\right] \leq w_i\cdot\E\left[\frac{1}{\tau_i}\right] - w_i^2$.
\end{claim}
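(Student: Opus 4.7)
The plan is to revisit the exact expression for $\Var[\hat{w}_i]$ derived in the proof of \Cref{variance_bound_item} and use \Cref{claim:tau_i_inverse_w_i} to evaluate the $\max$ without the slack introduced by bounding $\max(0, 1/(\tau_i w_i) - 1)$ by $1/(\tau_i w_i)$. Recall that the derivation in \Cref{variance_bound_item} established
\begin{align*}
\Var[\hat{w}_i] = w_i^2 \cdot \E\left[\max\left(0, \frac{1}{\tau_i w_i} - 1\right)\right],
\end{align*}
and only at the final step was the inequality $\max(0, 1/(\tau_i w_i) - 1) \leq 1/(\tau_i w_i)$ invoked. For $i \geq k$ this final step is wasteful, and that waste is precisely what we want to recover.

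Next I would invoke \Cref{claim:tau_i_inverse_w_i}, which gives $w_i \leq 1/\tau_i$ for every $i \geq k$, equivalently $\tau_i w_i \leq 1$, equivalently $\frac{1}{\tau_i w_i} \geq 1$. Consequently the positive-part truncation is vacuous and
\begin{align*}
\max\left(0, \frac{1}{\tau_i w_i} - 1\right) = \frac{1}{\tau_i w_i} - 1
\end{align*}
almost surely. Substituting this into the variance expression yields
\begin{align*}
\Var[\hat{w}_i] = w_i^2 \cdot \E\left[\frac{1}{\tau_i w_i} - 1\right] = w_i \cdot \E\left[\frac{1}{\tau_i}\right] - w_i^2,
\end{align*}
which is in fact an equality matching the stated upper bound.

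The argument is essentially a one-liner once the two earlier results are in hand, so there is no serious obstacle; the only point requiring care is to make sure we reach back to the exact equality in the proof of \Cref{variance_bound_item} rather than the looser inequality that was used there to prove the general bound. This subtle bookkeeping is what lets us save the extra $-w_i^2$ term for the large-weight indices $i \geq k$, which is exactly what will be needed to sharpen \Cref{thm:mainvar} into \Cref{clm:tighter} by summing $\Var[\hat{w}_i]$ separately over $i < k$ and $i \geq k$.
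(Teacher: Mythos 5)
Your proof is correct and follows essentially the same route as the paper: start from the exact identity $\Var[\hat{w}_i] = w_i^2\,\E[\max(0, \tfrac{1}{\tau_i w_i}-1)]$ established inside \Cref{variance_bound_item}, then use \Cref{claim:tau_i_inverse_w_i} to conclude that for $i \geq k$ the positive-part truncation is vacuous, yielding the exact equality $\Var[\hat{w}_i] = w_i\,\E[1/\tau_i] - w_i^2$. Your observation that the claimed inequality is actually an equality is also consistent with the paper's derivation.
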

\begin{proof}
We have that:
\begin{align*}
\Var\left[\hat{w}_i\right] &= w_i^2 \cdot \left(\E\left[\max\left(1, \frac{1}{\tau_i w_i}\right)\right] - 1\right) = w_i^2 \cdot \E\left[\max\left(0, \frac{1}{\tau_i w_i} -1\right)\right].
\end{align*}
By \Cref{claim:tau_i_inverse_w_i},  we have that $w_i \leq \frac{1}{\tau_i}$. So $\frac{1}{\tau_i w_i} -1$ is always positive for all $k \leq i$. Thus:
\begin{align*}
\Var\left[\hat{w}_i\right] &= w_i^2 \cdot \E\left[\max\left(0, \frac{1}{\tau_i w_i} -1\right)\right] = w_i^2 \cdot \E\left[\frac{1}{\tau_i w_i} -1\right] = w_i \cdot \E\left[\frac{1}{\tau_i}\right] - w_i^2. \qedhere 
\end{align*}
\end{proof}

\begin{claim}
\label{claim:tight_one_over_tau_i_expectation}
$\E\left[\frac{1}{\tau_i} \right] \leq \frac{W-w_i}{k-1}$
\end{claim}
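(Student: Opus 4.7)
The plan is to deduce this directly from \Cref{one_over_tau_expectation} by a reduction, mirroring the proof of \Cref{one_over_tau_i_expectation} but stopping one step earlier so as to retain a sharper constant. Concretely, I would observe that $\tau_i$ is, by definition, the $k^{\text{th}}$ smallest value of $\frac{u_j}{w_j}$ taken over $j \in \{1,\ldots,n\}\setminus\{i\}$. This is exactly the threshold that the priority sampling procedure would produce if we were to apply it with sample size $k-1$ to the reduced collection of weights $\{w_j : j \in \{1,\ldots,n\}\setminus\{i\}\}$, since for a sample of size $k-1$ the threshold is the $((k-1)+1)^{\text{st}} = k^{\text{th}}$ smallest scaled uniform.

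Having made this identification, I would apply \Cref{one_over_tau_expectation} to the reduced instance. That claim gives, for priority sampling with sample size $k'$ on a weight set of total weight $W'$, the bound $\E[1/\tau'] \leq W'/k'$. Substituting $k' = k-1$ and $W' = \sum_{j \neq i} w_j = W - w_i$ yields
\[
\E\!\left[\frac{1}{\tau_i}\right] \;\leq\; \frac{W'}{k-1} \;=\; \frac{W - w_i}{k-1},
\]
which is the desired inequality.

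The key distinction from \Cref{one_over_tau_i_expectation} is that there the authors further upper bounded $W - w_i \leq W$ to get a uniform bound independent of $i$; here we simply refrain from that final relaxation, since the stronger per-index bound is precisely what is needed to obtain the subtracted $\sum_i w_i^2$ terms in \Cref{clm:tighter}. For this reason I do not anticipate any substantive obstacle: the content of the argument is entirely captured by the earlier comparison argument for $1/\tau$, and the only new observation is a bookkeeping one about which reduced instance to invoke.
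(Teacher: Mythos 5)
Your argument is correct and is precisely the paper's own proof: the paper derives this claim directly from the proof of \Cref{one_over_tau_i_expectation}, simply omitting the final relaxation $W' = W - w_i \leq W$. No gap.
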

\begin{proof}
This can be derived direction from our proof of \Cref{one_over_tau_i_expectation}.
\end{proof}
Finally, we are ready to prove our tighter upper bound on the variance of Priority Sampling.
\begin{proof}[Proof of \Cref{clm:tighter}]
Based on \Cref{claim:tight_variance_w_hat} and \Cref{claim:tight_one_over_tau_i_expectation}, we have
\begin{align*}
\text{For every } i \text{ where } i \geq k, \quad &\Var\left[\hat{w}_i\right] \leq w_i \cdot \frac{W - w_i}{k - 1} - w_i^2. \\
\text{For every } i \text{ where } i < k, \quad &\Var\left[\hat{w}_i\right] \leq w_i \cdot \frac{W - w_i}{k - 1}.
\end{align*}
So we conclude that:
\begin{align*}
\sum_{i=1}^{n} \Var\left[\hat{w}_i\right] = \sum_{i=1}^{k-1} \Var\left[\hat{w}_i\right] + \sum_{i=k}^{n} \Var\left[\hat{w}_i\right] &\leq  \sum_{i=1}^{n} w_i \cdot \frac{W-w_i}{k-1} - \sum_{i=k}^{n} w_i^2 \\
&= \sum_{i=1}^{n} \frac{w_i\cdot W-w_i^2}{k-1} - \sum_{i=k}^{n} w_i^2 \\
&= \frac{W^2}{k-1} - \sum_{i=1}^{n} \frac{w_i^2}{k-1} - \sum_{i=k}^{n} w_i^2. \qedhere
\end{align*}
\end{proof}

\bibliographystyle{apalike}
\bibliography{paper}
\end{document}